\DeclareMathOperator*{\argmin}{argmin}
\newtheorem{proposition}{Proposition}
\begin{document}

\title{Image Restoration using Plug-and-Play CNN MAP Denoisers}

\author{\authorname{Siavash Bigdeli\sup{1}, David Honz\'atko\sup{1},
Sabine S\"usstrunk\sup{2} and L. Andrea Dunbar\sup{1}}
\affiliation{\sup{1}Centre Suisse d'Electronique et de Microtechnique (CSEM), Neuch\^atel, Switzerland}
\affiliation{\sup{2}School of Computer and Communication Sciences, \'Ecole Polytechnique F\'ed\'erale de Lausanne (EPFL), Switzerland}
\email{\{siavash.bigdeli, david.honzatko, andrea.dunbar\}@csem.ch, sabine.susstrunk@epfl.ch}
}

\keywords{Image Restoration, Image Denoising, MAP, Neural networks, Deep Learning}

\abstract{
	Plug-and-play denoisers can be used to perform generic image restoration tasks independent of the degradation type.
	These methods build on the fact that the Maximum a Posteriori (MAP) optimization can be solved using smaller sub-problems, including a MAP denoising optimization.
	We present the first end-to-end approach to MAP estimation for image denoising using deep neural networks.
	We show that our method is guaranteed to minimize the MAP denoising objective, which is then used in an optimization algorithm for generic image restoration.
	We provide theoretical analysis of our approach and show the quantitative performance of our method in several experiments.
	Our experimental results show that the proposed method can achieve 70x faster performance compared to the state-of-the-art, while maintaining the theoretical perspective of MAP.
}

\onecolumn \maketitle \normalsize \setcounter{footnote}{0} \vfill

\section{\uppercase{Introduction}}
\label{sec:introduction}
\noindent Image restoration is a classical signal processing problem with application in diverse domains such as biology, physics, and entertainment.
Due to the inherent ambiguity of this task, either the Maximum a Posteriori (MAP) or the Minimum Mean Squared Error (MMSE) estimators are usually used to produce consistent results.
With recent advances in deep learning, however, most of the methods employ the MMSE solution due to its simple loss and straightforward training.
The MMSE estimator is obtained by minimizing the euclidean distance between the results and the corresponding ground truth.
In image denoising and super-resolution, Zhang et al.~\cite{zhang2017beyond} showed that training deep neural networks using the MMSE objective can achieve state-of-the-art results.
Often the absolute norm is used in place of euclidean norm, which achieves visually more pleasant reconstructions.
However, these techniques are usually contaminated with undesired visual artifacts such as over-smoothness~\cite{isola2017image}, even though they can achieve better results in terms of peak signal to noise ratio (PSNR).

Another reason to use MAP estimators is the ability to use a single prior model to perform several image restoration tasks as a generic frame work.
This is usually done either by learning an explicit image prior, or by using plug-and-play denoisers that inherently learn the prior.
We present a novel method to obtain MAP results efficiently using deep neural networks.
In contrast to MMSE, the MAP estimator does not enforce correctness of intensity, but it optimizes for the most probable solution -- conditioned to the degraded observation.
In many applications such as medicine, where the detection overrules the correctness of signal, MAP is a better estimator than MMSE objective.

Performing image restoration model with MAP objective requires using an explicit image prior, which is usually very inefficient.
In this paper, we propose an optimization technique with improved efficiency to solve the MAP objective with an explicit image prior model. \footnote{The code and the trained models are available at: \url{https://github.com/DawyD/cnn-map-denoiser}}
In summary, the contributions of our work is as follows:
\begin{itemize}
\item A novel training strategy to learn a MAP denoiser using an end-to-end model and its neural network parametrization,
\item A generic image restoration algorithm based on Alternating direction method of multipliers (ADMM	) that uses our neural network and is more efficient in optimizing MAP inference compared to other methods using explicit priors.
\end{itemize}
The rest of the paper is organized as follows: In Section~\ref{sec:related} we discuss relevant work and the challenges of obtaining the MAP solution using neural networks.
Sections~\ref{sec:restoration},~\ref{sec:MMSEdenoising} discuss the background for image restoration using MAP and MMSE estimators.
In Section~\ref{sec:denoising}	we describe our new loss function for training neural networks to perform MAP image denoising.
Finally, we discuss and demonstrate our experimental results in Section~\ref{sec:experiments} and conclude our findings in Section~\ref{sec:conclusion}.

\section{\uppercase{Related work}}
\label{sec:related}

\noindent Several methods have been proposed recently  that use deep neural network for generic MAP image reconstruction.
Among these models, some are based on hand designed features~\cite{ulyanov2018deep}, and some are based on the explicit density of images~\cite{ulyanov2018deep,bigdeli2017deep,bigdeli2017image}.
These techniques mostly use iterative optimization based on gradient descent that take several steps to converge.
On the other hand, several plug-and-play methods have been developed for generic image restoration, but only a few methods use the explicit MAP objective~\cite{ahmad2019plug}.
Most of the proposed techniques try to benefit from the fact that a sub-problem of the image reconstruction optimization is a denoising problem and use ad-hoc denoisers to solve this sub-problem.
Some of these methods use more classical denoisers such as BM3D or Non-local means~\cite{heide2014flexisp,venkatakrishnan2013plug}.
More recent approaches use convolutional neural networks, trained as an MMSE model for Gaussian noise removal~\cite{zhang2017learning}.
Reehorst and Schniter~\cite{reehorst2018regularization} showed, however, that the denoising sub-problem is of the form of a MAP objective and will not express the original prior if replaced with other types of denoisers like MMSE.
This work focuses on developing a network model that can substitute the MAP denoising sub-problem, and use it to perform generic reconstruction of arbitrary degradation models.

Similarly, others have investigated how to provide explicit priors to replace the denoising sub-problem.
Chang et al.~\cite{chang2017one} employ a classification network by assuming a discrete manifold for natural images, where all images have the same likelihoods.
Sonderby et al.~\cite{sonderby2016amortised} makes no assumption about the image prior, but their optimization is limited to membership-based data dependencies (such as super-resolution) and cannot be generalized for other tasks after the training.

We summarize the main characteristics of related work in ~\ref{tab:comparison}.
IRCNN~\cite{zhang2017learning} and DIP~\cite{ulyanov2018deep} do not have an explicit prior, therefore their solution cannot be expressed in the framework of the MAP image restoration problem.
On the other hand DMSP~\cite{bigdeli2017deep} uses the slow gradient descent optimization that requires many iterations to converge.
We propose an algorithm that can perform up to 70x faster than DMSP and DIP by using the ADMM optimization.
In contrast to IRCNN, our prior is explicit and we require only one network throughout the optimization process.

\begin{table}[t]
	\vspace{0.2cm}
	\caption{Comparison of generic image restoration algorithms and their characteristics. Our method preserves the theoretical guarantees of the conventional MAP estimator by providing an explicit image prior, while also perform an efficient optimization.}\label{tab:comparison} \centering
	\begin{tabular}{l c c c c}
		\hlineB{3}
		Method & Prior & Optim. & Speed & \# Nets \\
		\hline
		DIP & Implicit & GD & Slow & 1 \\
		IRCNN & Implicit & HQS & Fast & 25 \\
		DMSP & Explicit & GD & Slow & 1 \\
		\textbf{Ours} & \textbf{Explicit} & \textbf{ADMM} & \textbf{Fast} & \textbf{1} \\
		\hlineB{3}
	\end{tabular}
\end{table}

\section{\uppercase{Image Restoration via MAP Objective}}
\label{sec:restoration}
\noindent We use the standard model of degradation including the noise as
\begin{equation}
y = Kx + \eta,
\end{equation}
where $x$ is the unknown sharp image, $\eta \sim \mathcal{N}(0,\sigma^2)$ is a noise vector with standard deviation $\sigma$, and $K$ is a Toeplitz matrix representing the blur kernel.
The MAP estimator intuitively finds the most probable solution of the degradation by maximizing its posterior probability.
Using our formulation of the degradation model, this leads to the following optimization
\begin{equation}
\argmin_x \frac{1}{2\sigma^2}||Kx-y|| - \log p(x),
\end{equation}
where $p(x)$ is called the image prior and it indicates the probability of the solution $x$.
This objective can be written in the form of
\begin{align}
\argmin_x &\frac{1}{2\sigma^2}||Kx-y|| - \log p(z) \\
\text{ sbj. to } & z=x, \nonumber
\end{align}
where we substitute a new variable $z$ in place of $x$ in the prior term.
This objective has the following augmented Lagrangian~\cite{venkatakrishnan2013plug}:
\begin{equation}
\frac{1}{2\sigma^2}||Kx-y|| - \log p(z) + \frac{\rho}{2}||x-z+\lambda|| - \frac{\rho}{2}||\lambda||,
\end{equation}
where $\lambda$ represents the Lagrange multipliers.
Using the ADMM approximation, we can optimize the Lagrangian by iteratively solving the following objectives:
\begin{align}
\label{eq:minx}
\hat{x} &:= \argmin_x \frac{1}{2\sigma^2}||Kx-y|| + \frac{\rho}{2}||x-\hat{z}+\lambda||\\
\label{eq:minz}
\hat{z} &:= \argmin_z  - \log p(z)+ \frac{\rho}{2}||\hat{x}+\lambda-z||\\
\label{eq:updatelam}
\lambda &:= \lambda + (\hat{x}- \hat{z}).
\end{align}
This strategy has the benefit that the first Equation~\eqref{eq:minx} is quadratic and thus has a closed form solution:
\begin{equation}
\label{eq:minxclosedform}
\hat{x} := \big(K^TK+\sigma^2\rho \big)^{-1} \big(K^Ty+\sigma^2\rho(\hat{z}-\lambda) \big),
\end{equation}
which is often solved very efficiently in the Frequency domain as
\begin{equation}
\label{eq:minxclosedformFFT}
\hat{x} = \mathscr{F}^{-1}\left(\frac{\mathscr{F}(\bar{K}) \cdot \mathscr{F}(y) + \sigma^2\rho \mathscr{F}(\hat{z}-\lambda)}{\mathscr{F}(K^TK) + \sigma^2\rho}\right),
\end{equation}
where $\mathscr{F}$ denotes the Fourier transform, $\mathscr{F}^{-1}$ its inverse, and $\bar{K}$ a flipped and conjugate of the kernel $K$.

Another attractive point of using the ADMM approximation is that Equation~\eqref{eq:minz} represents a MAP denoising problem, where the objective is to denoise $\hat{x}+\lambda$ given the noise variance $\sigma_R^2 = \frac{1}{\rho}$.
This means that no matter the properties of the degradation model (such as blur kernel and noise variance) in Equation~\eqref{eq:minx}, we can guide its prior optimization using a separate denoising problem in Equation~\eqref{eq:minz}.
Next, we show how we can learn an end-to-end MAP denoiser to optimize the objective  and to preserve the explicitness of our natural image prior.

\section{\uppercase{Denoising Autoencoders with MMSE Objective}}
\label{sec:MMSEdenoising}
\noindent The ADMM or HQS approximations, have allowed the MAP optimization to be solved efficiently using a denoiser.
Given the fact that CNN denoisers have achieved state-of-the-art results~\cite{zhang2017beyond}, one might use these denoisers to solve Equation~\eqref{eq:minz}.
A denoising auto encoder (DAE) $R$ is trained to minimize the following MMSE loss:
\begin{equation}
\mathcal{L}_{MSE} = \sum_x ||R(x+\eta_R) - x||,
\end{equation}
where $\eta_R \sim \mathcal{N}(0,\sigma_R^2)$.
Simonceli have showed that the minimizer of this loss is the local mean of the distribution, weighted by the noise distribution. I.e.
\begin{equation}
R^*(y) = y + \sigma_R^2 \nabla \log \bar{p}(y),
\end{equation}
where $\bar{p}$ indicates the density estimate of the natural images, using a Gaussian kernel with the bandwidth $\sigma_R^2$.
Therefore, we can see that the optimal MMSE denoiser does not correspond to the solution of the MAP denoiser, i.e. Equation~\eqref{eq:minz}.
However, the difference between input and output of an optimal MMSE denoiser $R^*$, captures the gradient underlying distribution~\cite{zhang2017beyond}
\begin{equation}
R^*(y) - y = \sigma_R^2 \nabla \log \bar{p}(y),
\end{equation}
where $\bar{p}$ indicates the density estimate of the natural images, using a Gaussian kernel with the bandwidth $\sigma_R^2$.
This property has been used before~\cite{bigdeli2017deep} to optimize the MAP objective using iterative gradient descent updates.
As we show next, we can still benefit from this property to train a MAP denoiser for Equation~\eqref{eq:minz}.

\begin{figure}[t]
	%\vspace{-0.2cm}
	\centering
	{\epsfig{file = 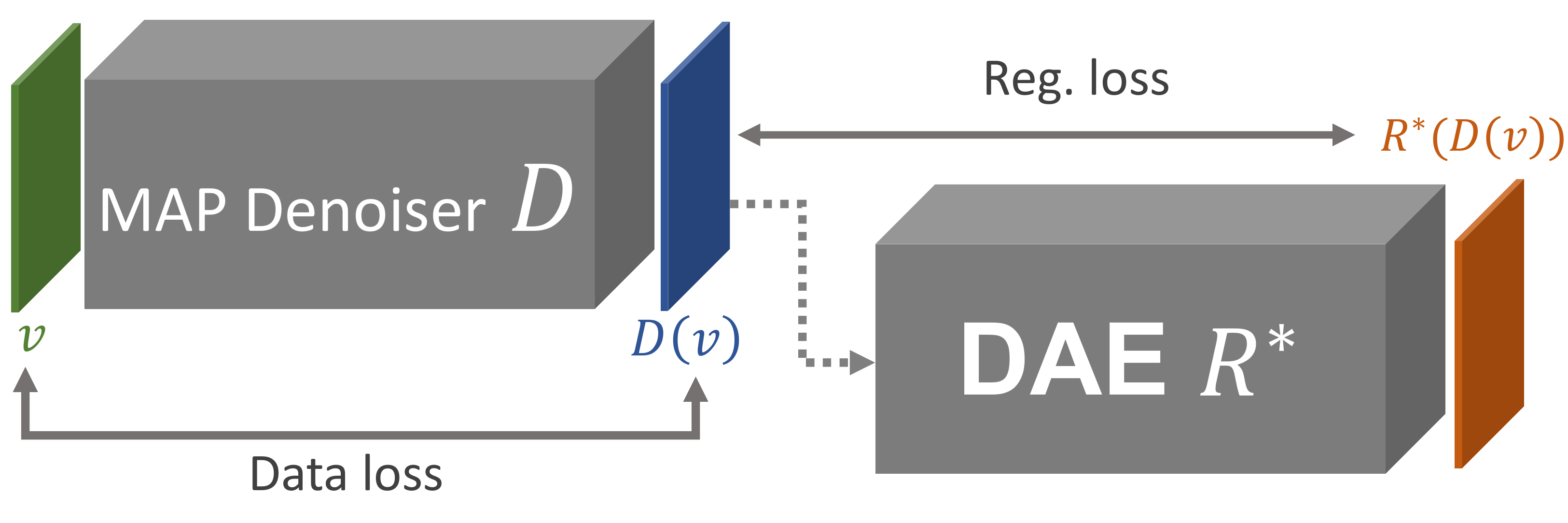, width = \linewidth}}
	\caption{Our end-to-end training of MAP denoisers using trained DAEs: The DAE R is used to enforce the regularization term for optimizing D. At the same time, the output of the network D is kept close to its input using the data loss.}
	\label{fig:networks}
	%\vspace{-0.1cm}
\end{figure}

\section{\uppercase{Learning MAP Denoisers}}
\label{sec:denoising}

\noindent As shown before, the derived denoiser in Equation~\eqref{eq:minz} is a MAP estimator and using a MMSE denoiser would instead implicitly change the prior from the natural image distribution to an unknown representation.
To preserve the correctness of the prior, we propose to train and use a MAP denoiser instead.

We aim to train a network $D$ that minimizes the objective in Equation~\eqref{eq:minz} by
approximating the true distribution with its kernel estimate as above.
Inspired by the results of optimal DAEs, we formulate the following loss for our map denoisers,
\begin{equation}
\mathcal{L}_{MAP} = \sum_v \frac{1}{\sigma_R^2}||\bar{v}-\bar{\bar{v}}|| + \frac{\rho}{2}||\bar{v}-v||,
\label{eq:mapdae}
\end{equation}
where $\bar{v} = D(v)$ and $\bar{\bar{v}} = R^*(\bar{v})$ is the precomputed output of the optimal MMSE DAE $R^*$.

\begin{proposition}
Let us rewrite the cost function in Equation~\eqref{eq:minz} as $f(x, v)= -\log p(x)+ \alpha||x-v||$.
Denoting the gradient of the MAP-DAE loss (Equation ~\ref{eq:mapdae}) with respect to the estimator's output as $\nabla_{\bar{v}} \mathcal{L}_{MAP}$, then
\begin{align}
f(\bar{v}, v)  > f(\bar{v} -\epsilon \nabla_{\bar{v}} \mathcal{L}_{MAP}, v),
\end{align}
for small enough step size $\epsilon$.
\label{prop:mapdae}
\end{proposition}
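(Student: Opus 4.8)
The plan is to show that $-\nabla_{\bar v}\mathcal{L}_{MAP}$ is a descent direction for $f(\cdot,v)$ at the point $\bar v=D(v)$, after which the stated strict inequality follows from a textbook first-order argument. The only structural input is the identity recalled above for the optimal MMSE denoiser, $R^*(u)-u=\sigma_R^2\,\nabla\log\bar p(u)$; evaluating it at $u=\bar v$ rewrites the residual appearing in the loss as $\bar v-\bar{\bar v}=\bar v-R^*(\bar v)=-\sigma_R^2\,\nabla\log\bar p(\bar v)$.

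First I would differentiate $\mathcal{L}_{MAP}$ with respect to the estimator's output $\bar v$, treating the target $\bar{\bar v}=R^*(\bar v)$ as the precomputed (detached) quantity it is declared to be, i.e.\ without backpropagating through $R^*$, and then substitute the identity above. With the Euclidean norms this yields $\nabla_{\bar v}\mathcal{L}_{MAP}=c\big({-}\nabla\log\bar p(\bar v)+\rho\,(\bar v-v)\big)$ for a positive constant $c$, once the weight of the first term of $\mathcal{L}_{MAP}$ is read consistently with the $\frac{\rho}{2}$ in Equation~\eqref{eq:minz}. On the other hand, identifying the prior $p$ in the statement with the Gaussian kernel estimate $\bar p$ used throughout this section and taking $\alpha=\rho/2$, one has $f(x,v)=-\log\bar p(x)+\frac{\rho}{2}||x-v||$, so that $\nabla_x f(x,v)\big|_{x=\bar v}=-\nabla\log\bar p(\bar v)+\rho\,(\bar v-v)$. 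Comparing the two expressions gives $\nabla_{\bar v}\mathcal{L}_{MAP}=c\,\nabla_x f(\bar v,v)$, hence $\langle\nabla_x f(\bar v,v),\,\nabla_{\bar v}\mathcal{L}_{MAP}\rangle=c\,||\nabla_x f(\bar v,v)||^2\ge 0$.

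To conclude, I would apply Taylor's theorem to $f(\cdot,v)$, which is smooth since $\bar p$ is a Gaussian kernel estimate (strictly positive, hence $\log\bar p\in C^\infty$). With $g=\nabla_{\bar v}\mathcal{L}_{MAP}$ this gives
\begin{equation}
f(\bar v-\epsilon g,\,v)=f(\bar v,v)-\epsilon\,c\,||g||^2+O(\epsilon^2).
\end{equation}
As long as $\bar v$ is not already a stationary point of $f(\cdot,v)$, so that $g\ne 0$ and the first-order term is strictly negative, it dominates the remainder for all sufficiently small $\epsilon>0$, which yields $f(\bar v,v)>f(\bar v-\epsilon g,v)$ as claimed.

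The conceptual crux, and the step I expect to need the most care, is the gradient identification in the middle paragraph: $\mathcal{L}_{MAP}$ is designed precisely so that its derivative with respect to the network output reproduces, up to a positive scalar, the gradient of the MAP denoising objective $f$. This rests on three points that must be made explicit: (i) the MMSE target $\bar{\bar v}$ is detached, so $R^*$ is not differentiated (differentiating through $R^*$ would instead produce a Hessian term and change the objective); (ii) the identity $R^*(u)-u=\sigma_R^2\nabla\log\bar p(u)$; and (iii) the regularization weights of $\mathcal{L}_{MAP}$ agree with those of Equation~\eqref{eq:minz}. If one does not wish to assume (iii) exactly, the argument still goes through provided $\langle\nabla_x f(\bar v,v),\,\nabla_{\bar v}\mathcal{L}_{MAP}\rangle>0$, which one would then verify directly. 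Everything after this alignment is the standard ``a step along the negative gradient decreases the objective'' reasoning, subject only to the usual caveats (nonzero gradient, small enough step, smoothness of the prior estimate).
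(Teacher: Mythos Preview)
Your proposal is correct and follows essentially the same route as the paper: compute $\nabla_{\bar v}\mathcal{L}_{MAP}$ with $\bar{\bar v}$ treated as detached, substitute the MMSE identity $\bar v-R^*(\bar v)=-\sigma_R^2\nabla\log\bar p(\bar v)$, and identify the result with $\nabla_x f(\bar v,v)$, from which the descent inequality follows for small $\epsilon$. Your version is in fact more careful than the paper's, which simply asserts equality of the two gradients and stops, whereas you make explicit the positive scalar $c$, the Taylor argument, and the need for $\bar v$ not to be stationary.
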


\begin{proof}
We write the gradient of Equation~\eqref{eq:mapdae} as,
\begin{align}
\nabla_{\bar{v}} \mathcal{L}_{MAP} &=  \frac{1}{\sigma_R^2}( \bar{v}-\bar{\bar{v}} ) + \frac{\rho}{2}(\bar{v}-v), \\
&=  \frac{1}{\sigma_R^2}\big(\bar{v}-R^*(\bar{v})\big) + \frac{\rho}{2}(\bar{v}-v), \\
&=  -\nabla_{\bar{v}} \log \bar{p}(\bar{v}) + \frac{\rho}{2}(\bar{v}-v), \\
&=  \nabla_{\bar{v}} f(\bar{v}, v),
\end{align}
which means that the gradients of the loss in Equation~\eqref{eq:mapdae} are equal to the gradients of the cost function in Equation~\eqref{eq:minz}.
Therefore, by minimizing the MAP-DAE loss (Equation ~\ref{eq:mapdae}) with a small enough step size $\epsilon$, we will also minimize the cost function in Equation~\eqref{eq:minz}.
\end{proof}

This means that we can train a network to optimize Equation~\eqref{eq:minz} by minimizing $\mathcal{L}_{MAP}$ with respect to the parameters of the network $D$.
Additionally, this end-to-end training is performed without requiring any paired noisy images and their corresponding ground truth images.
Note that the minimization problem in Equation~\eqref{eq:minz} is non-convex (due to the inherent complexity of the underlying natural image distribution) and any optimization scheme can only guarantee convergence to the local minima.
However, training a neural network to optimize this objective over all images in the dataset helps in generalization~\cite{zhang2016understanding}, which leads to getting closer to the global optimum.

For the inference time, we feed our network with value $v = \hat{x}+\lambda$, which gives the denoised results $\hat{z} = D^*(v)$ based on the MAP objective Equation~\eqref{eq:minz}.
The resulting intermediate variable is then used in the rest of the ADMM optimization, which we describe next.

%\vfill
\section{\uppercase{Experiments}}
\label{sec:experiments}

\noindent In this section we describe out implementation details and the experimental setup for two image restoration problems: image deblurring and inpainting.
We compare our method with the state-of-the-art using the structural similarity image measure (SSIM) and peak signal to noise ratio (PSNR) measures.
\subsection{Network Architecture and Training}
%Architecture

We have parametrized our denoiser networks based on the DnCNN architecture~\cite{zhang2017beyond}.
We employ this network for both the MAP and the DAE denoisers.
The networks consist of 17 convolutional layers with kernel of size $3\times3$, 64 feature channels with ReLU non-linearities.
The final receptive field of our network is then $35\times35$.
Unlike the original DnCNN training, we did not see any benefit of using batch normalization in our networks, and omit this in our final models.

The training of the two networks are independent: to train the MAP denoiser network, we need an optimally trained DAE.
The DAE training is also independent, since it requires noisy input samples generated from our clean dataset.
In practice, we train the two networks at the same time, taking advantage of the parallelization to speed up the training.
We make sure that the DAE is converged long enough before stopping the MAP denoiser training.
In defining the loss terms of the MAP denoiser, we also make sure that the gradients of the regularization term are not propagated through the DAE.

Similar to prior work~\cite{bigdeli2017image}, we observed that the DnCNN network only performs well if it is used with noisy input images, with the same noise standard deviation $\sigma_R$ as in training.
Consequently, we also add noise with standard deviation $\sigma_R$ to network output $D(v)$ before passing it to the DAE $R$.
This has been shown to approximate the actual objective by minimizing an upper-bound of the actual objective.

We have trained the proposed MAP denoiser on the same dataset as the original DnCNN~\cite{zhang2017beyond}.
We used training patches of size $40\times40$ which were cropped from the Berkeley segmentation dataset~\cite{berkeley} originally consisting of 400 gray-scale images.

\subsection{MAP Denoiser Evaluation}
We have evaluated the behaviour of the MAP denoiser using visual inspections.
Figure~\ref{fig:denoiser} shows the results of the network when applied to a noise-free image.
The network moves the input towards the more likely regions of the natural image distribution be removing unlikely patterns.
This is done by progressively cancelling out some high frequency details, such as noise or small edges, while preserving the sharpness of the more stronger edges.
This leads to the visually cleaner results, which can be controlled by the number of iterations.
We can achieve visually cleaner results by controlling  the number of update iterations.

\begin{figure}[t]
	%\vspace{-0.2cm}
	\centering
	{\epsfig{file = 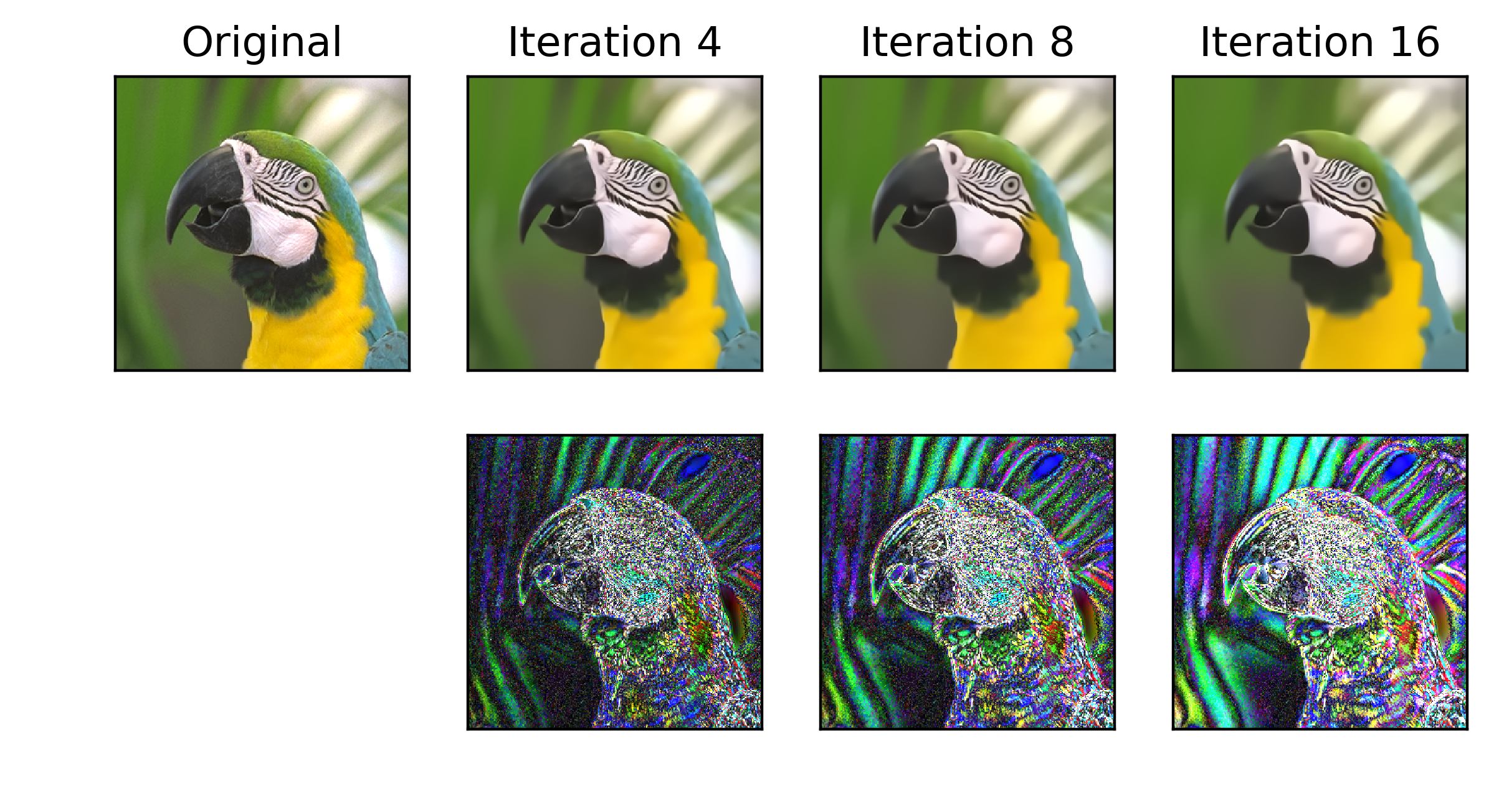, width = \linewidth}}
	\caption{Iterative results of our MAP denoiser network (top row) and the residual w.r.t. the input image (bottom row). }
	\label{fig:denoiser}
	%\vspace{-0.1cm}
\end{figure}

\subsection{Non-Blind Image Deblurring}
In this work we focus on application of the proposed image restoration method to non-blind image deblurring. We follow the ADMM approach for image restoration presented in Section~\ref{sec:restoration}.
We summarize our approach in Algorithm~\ref{algo:admm}, where each iteration of the algorithm is as follows:
In the first step we optimize $\hat{x}$.
For image deblurring this step can be done efficiently in frequency domain using Equation~\eqref{eq:minxclosedformFFT}.
Second, we compute $\hat{z}$ by a single feed-forward step using our trained MAP denoiser $D^*$.
And finally, we update the variable $\lambda$ and reiterate until convergence.

\begin{algorithm}[b]
\SetKwInOut{Input}{input}\SetKwInOut{Output}{output}
 \caption{Optimization steps for non-blind image deblurring.}
 \Input{Degraded image $y$, blur kernel matrix $K$, and noise standard deviation $\sigma$}
 \While{not converged}{
 \textbf{1.} \small{$\hat{x} = \big(K^TK+\sigma^2\rho \big)^{-1} \big(K^Ty+\sigma^2\rho(\hat{z}-\lambda) \big)$} \\
 \textbf{2.} $\hat{z} = D^*(\hat{x} + \lambda)$ \\
 \textbf{3.} $\lambda = \lambda + (\hat{x}- \hat{z})$
 }
 \Output{MAP estimate output image $\hat{x}$}
 \label{algo:admm}
\end{algorithm}

We have conducted several experiments on various datasets to find the best hyper-parameters for grayscale non-blind image deblurring.
We found that using the DAE standard deviation $\sigma_R = 7$ performs best in practice.
We also found that setting $\rho = 1/\sigma_R^2$, so that the Equation~\eqref{eq:minx} is balanced, performs best in practice.
Such setting leads to best results on Sun dataset~\cite{sundataset}, however, the optimal value for each image might be different.
In the experiments we use 75 iterations of the ADMM algorithm. However, as shown in Figure~\ref{fig:diff}, 35 iterations is practically enough for most experiences.

The processing speed and the convergence rate is very important in all iterative approaches.
As shown in Figure~\ref{fig:diff}, the proposed method converges very fast compared to DMSP (both methods are solving the same MAP optimization).
A color image from BSDS300 can be processed using 35 iterations in about 0.8 seconds on an NVIDIA GTX2080 GPU including the data transfer between the host computer and the device.
This makes our method very attractive and more useful in practice.
We also show the convergence speed of IRCNN, where the noise level of the denoiser is exponentially decayed from 49 to 15. We note, that the convergence speed of IRCNN may be significantly increased at the cost of lower PSNR.

\begin{figure}[t]
	%\vspace{-0.2cm}
	\centering
	{\epsfig{file = 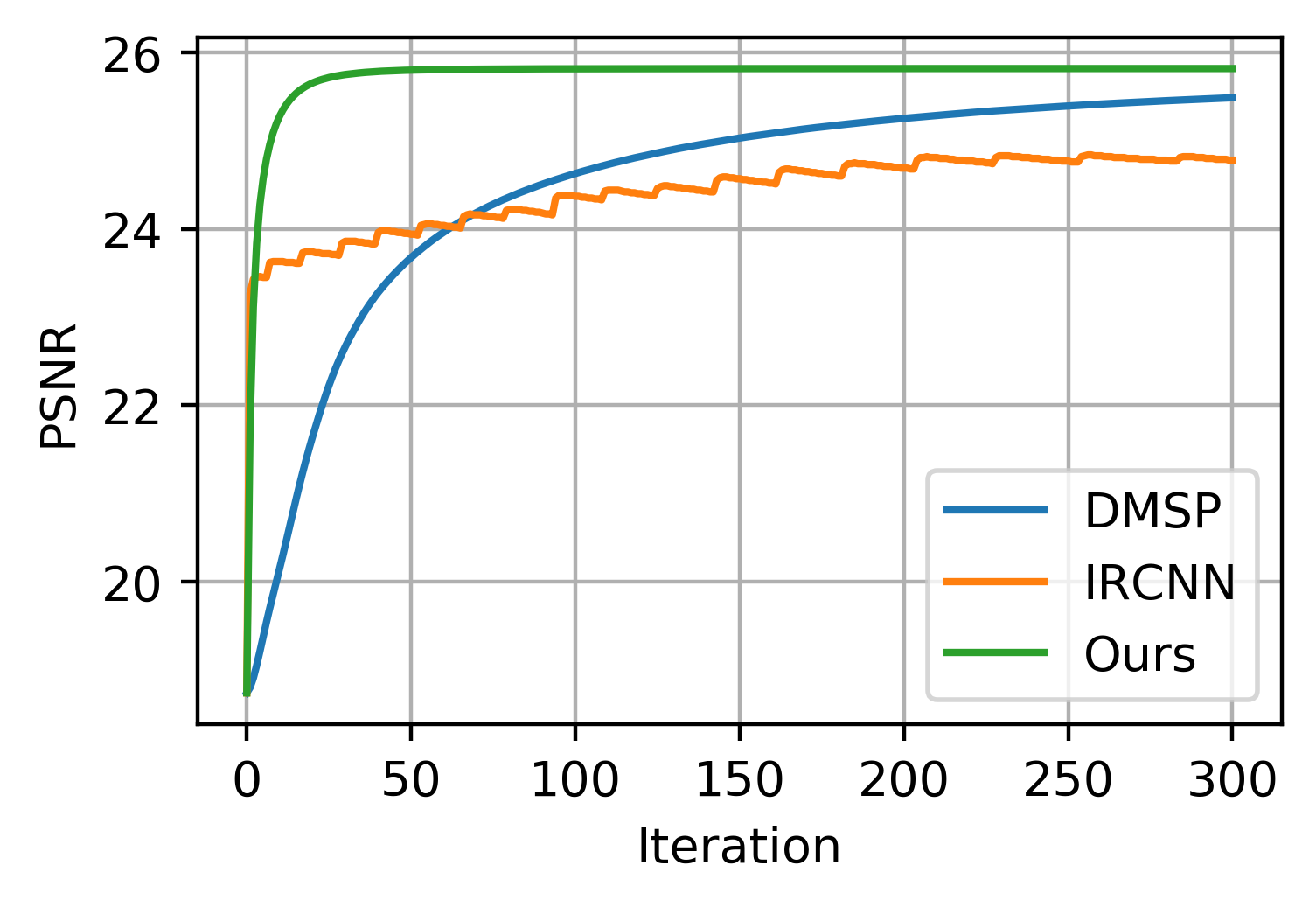, width = \linewidth}}
	\caption{Convergence speed comparison on an image from BSDS300. Note that DMSP and the proposed method try to optimize the same objective. Using the ADMM approach, we can speed-up the optimization up to 70x faster.}
	\label{fig:diff}
	%\vspace{-0.1cm}
\end{figure}

\subsubsection{Optimization}
To be consistent with other methods, for all the experiments we first blur the ground-truth test image with the blur kernel. The blurring is done by convolution with a flipped version of a kernel and only the valid area is preserved. We also add a Gaussian noise with the desired standard deviation $\sigma$.
After the restoration, we measure the PSNR on the valid area.

Since we optimize $\hat{x}$ in the frequency domain, the proposed method works the best when the degradation is done using circular convolution. However, in case we receive only the valid area, assuming the circular convolution leads to high error on the boarders which gets propagated towards the center in subsequent iterations.
To alleviate this behaviour and maintain the benefit of processing in frequency domain, we first pad the degraded image by replicating the edges to the original size. Furthermore, in each iteration, we first estimate the non-valid area of the $y$ by performing a cyclic convolution on the current estimate of $\hat{x}$ with the kernel.
Such treatment leads to plausible results and yet it is fast enough due to the processing in the frequency domain.
Figure~\ref{fig:deblurring} shows the iterative results of our optimization, where the image is getting sharper using the proposed denoising prior.

\begin{figure*}[t]
	%\vspace{-0.2cm}
	\centering
	{\epsfig{file = 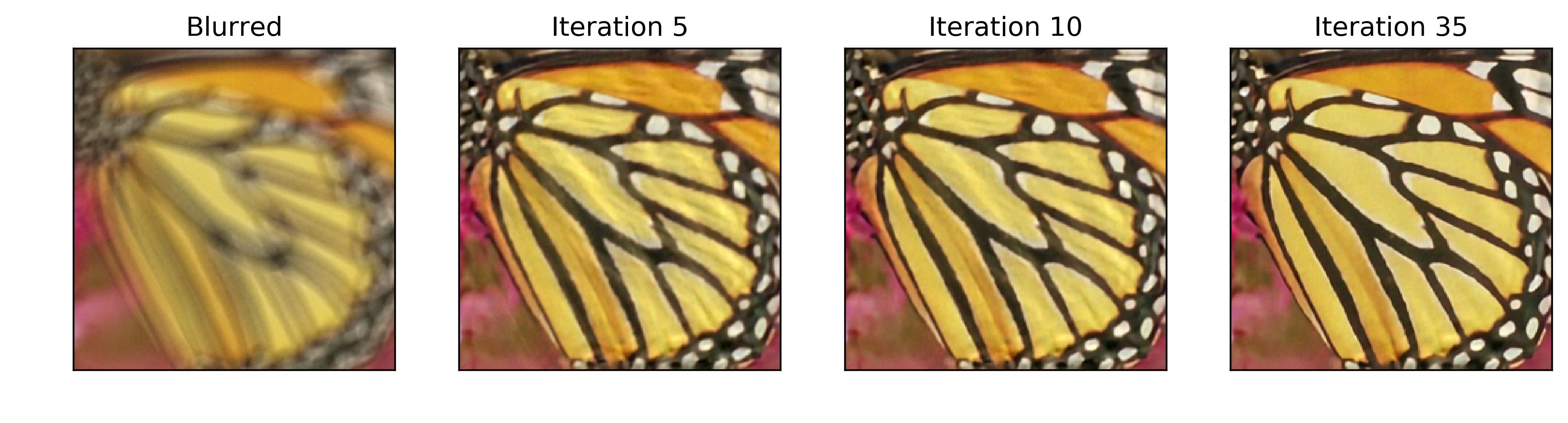, width = \linewidth}}
	\caption{Deblurring results of the ADMMM iterations using the proposed algorithm. Our MAP denoiser network encourages sharp image edges and removes undesired artifacts.}
	\label{fig:deblurring}
	%\vspace{-0.1cm}
\end{figure*}

\begin{table*}
	%\vspace{-0.2cm}
	\caption{Quantitative comparisons for non-blind image deblurring using two datasets in terms of PSNR.}\label{tab:psnr_deblurSun} \centering
	\begin{tabular}{ l r c c c c c c c c }
		\hlineB{3}
		& \multicolumn{4}{c}{Sun~\cite{sundataset}} && \multicolumn{4}{c}{BSDS300~\cite{berkeley}} \\
		\cline{2-5} \cline{7-10}

		Method \hspace{90pt} $\sigma \rightarrow$ & 2.55 & 5.10 & 7.65 & 10.2  && 2.55 & 5.10 & 7.65 & 10.2 \\
		\hline

		FD \cite{krishnan2009fast} & 30.79 & 28.90 & 27.86 &  27.14 && 24.44 & 23.24 & 22.64 & 22.07 \\

		EPLL \cite{zoran2011learning} & \textbf{32.05} & 29.60 & 28.25 & 27.34 && 25.38 & 23.53 & 22.54 & 21.91 \\

		CSF \cite{schmidt2014shrinkage} & 30.88 & 28.60 & 27.65 & 26.97 && 24.73 & 23.60 & 22.88 & 22.44 \\

		TNRD \cite{chen2016trainable} & 30.03 & 28.79 & 28.04 & 27.54 && 24.17 & 23.76 & 23.27 & 22.87 \\

		DAEP \cite{bigdeli2017image} & 31.76 & 29.31 & 28.01 & 27.16 && 25.42& 23.67 & 22.78 & 22.21 \\

		IRCNN \cite{zhang2017learning} & 31.80 & \textbf{30.13} & 28.93 & 28.09 && 25.60& 24.24 & 23.42 & 22.91 \\

		GradNet 7S \cite{jin2017noise} & 31.75 & 29.31 & 28.04 & 27.54 && 25.57 & 24.23 & 23.46 & 22.94 \\

		DMSP \cite{bigdeli2017deep} & 29.41 & 29.04 & 28.56 & 27.97 && 25.69 & 24.45 & \textbf{23.60} & \textbf{22.99} \\

		Ours & 31.00 & 29.96 & 2\textbf{8.96} & \textbf{28.13} && \textbf{26.18} & \textbf{24.52} & 23.51 & 22.79 \\
		\hlineB{3}

	\end{tabular}
\end{table*}

\begin{table*}
	\vspace{0.2cm}
	\caption{Comparison of SSIM scores for non-blind image deblurring using 	two datasets.}\label{tab:ssim_deblurSun} \centering
	\begin{tabular}{ l r c c c c c c c c }
		\hlineB{3}
		& \multicolumn{4}{c}{Sun \cite{sundataset}} && \multicolumn{4}{c}{BSDS300 \cite{berkeley}} \\
		\cline{2-5} \cline{7-10}
		
		Method \hspace{90pt} $\sigma \rightarrow$ & 2.55 & 5.10 & 7.65 & 10.2  && 2.55 & 5.10 & 7.65 & 10.2 \\
		\hline
		
		FD \cite{krishnan2009fast} & 0.851 & 0.787 & 0.744 &  0.714 && 0.664 & 0.577 & 0.534 & 0.492 \\
		
		EPLL \cite{zoran2011learning} & \textbf{0.880} & 0.807 & 0.758 & 0.721 && 0.712 & 0.590 & 0.521 & 0.476 \\
		
		CSF \cite{schmidt2014shrinkage} & 0.853 & 0.752 & 0.718 & 0.681 && 0.693 & 0.612 & 0.558 & 0.521 \\
		
		TNRD \cite{chen2016trainable} & 0.844 & 0.790 & 0.750 & 0.739 && 0.690 & 0.631 & 0.589 & 0.550 \\
		
		GradNet 7S \cite{jin2017noise} & 0.873 & 0.798 & 0.750 & 0.733 && 0.731 & 0.653 & 0.595 & 0.552 \\
		
		DMSP \cite{bigdeli2017deep} & - & - & - & - && 0.740 & \textbf{0.671} & \textbf{0.611} & 0.563 \\
		
		Ours & 0.829 & \textbf{0.817} & \textbf{0.788} & \textbf{0.758} && 0.768 & 0.668 & 0.595 & 0.540 \\
		\hlineB{3}
		
	\end{tabular}
\end{table*}

\subsubsection{Datasets}
We test the method on two different datasets and report the results in Table~\ref{tab:psnr_deblurSun} and Table~\ref{tab:ssim_deblurSun}.
The first one is the Sun dataset~\cite{sundataset} that consists of 80 images with the longest dimension being $924$ pixels.
The dataset contains 8 blur kernels of various sizes to simulate the degraded images.
The second dataset we use for evaluation is the BSDS300~\cite{berkeley}.
It consists of 300 images of size $321\times481$, which we convert to gray-scale.
We test this dataset with the 5 large blur kernels as in~\cite{jinkernels}.
These kernels visualized in Figure~\ref{fig:kernels}, have complex structures and lead to challenging deblurring tasks.

\begin{figure}[b]
	%\vspace{-0.2cm}
	\centering
	{\epsfig{file = 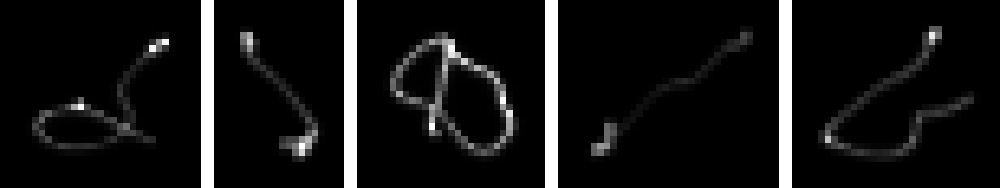, width = \linewidth}}
	\caption{Blur kernels used in our experiments \cite{jinkernels}.}
	\label{fig:kernels}
	\vspace{0.1cm}
\end{figure}

\begin{figure*}[t!]
	\centering
	\begin{subfigure}[t]{0.15\textwidth}
		\centering
		{\epsfig{file = 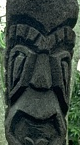, width = \linewidth}}
		\caption{Ground truth}
	\end{subfigure}%
	~ 
	\begin{subfigure}[t]{0.15\textwidth}
	\centering
	{\epsfig{file = 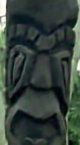, width = \linewidth}}
	\caption{EPLL}
    \end{subfigure}%
    ~ 
	\begin{subfigure}[t]{0.15\textwidth}
	\centering
	{\epsfig{file = 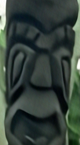, width = \linewidth}}
	\caption{DAEP}
	\end{subfigure}%
	~ 
	\begin{subfigure}[t]{0.15\textwidth}
	\centering
	{\epsfig{file = 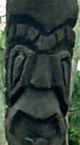, width = \linewidth}}
	\caption{GradNet~7S}
	\end{subfigure}%
	~ 
	\begin{subfigure}[t]{0.15\textwidth}
	\centering
	{\epsfig{file = 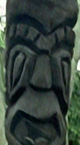, width = \linewidth}}
	\caption{DMSP}
	\end{subfigure}%
	~ 
	\begin{subfigure}[t]{0.15\textwidth}
		\centering
		{\epsfig{file = 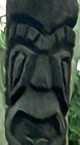, width = \linewidth}}
		\caption{Ours}
	\end{subfigure}
	\caption{Visual Comparison of the selected methods for the task of non-blind image deblurring.}
	\label{fig:debluringComparison}
\end{figure*}

Figure~\ref{fig:debluringComparison}, shows a visual comparison between our method and several other state-of-the-art techniques.
Our method is able to capture the sharp structures of the image, while remaining highly efficient compared to other algorithms.

%\subsection{Single-Image Super-Resolution}
\subsection{Image Inpainting}
In addition to the deblurring method, we have also tested the proposed method on the inpainting problem.
We follow the ADMM approach for image restoration presented in Section~\ref{sec:restoration}.
To speed up the convergence, we initialize our solution $\hat{x}$ with image inpainted using a median filter.

\subsubsection{Optimization}
We conduct the test with 80\% of the missing pixels where we add Gaussian noise with $\sigma=12$ prior to the remaining pixels.
The tested methods are provided both with the degraded image and with the mask specifying which pixels were dropped.

In contrast to the image deblurring, we use 300 iterations of the ADMM algorithm. Moreover, in each iteration, we use 200 steps of gradient descent optimization to approximate Equation~\eqref{eq:minx} of the ADMM.

\subsubsection{Dataset}
We evaluate the methods using PSNR and SSIM scores on classical dataset consisting of images of \textit{cameraman}, \textit{house}, \textit{peppers}, \textit{Lena}, \textit{Barbara}, \textit{boat}, \textit{hill}, and \textit{couple}.

Tables~\ref{tab:psnr_inpaint12},~\ref{tab:ssim_inpaint12} show the PSNR and SSIM scores of our method compared to the state-of-the-art.
Again, our method can achieve state-of-the-art results for the task of image inpainting.
Note that we use the same MAP denoising network as in the previous experiments.
We also visualize qualitative results in Figure~\ref{fig:inpainting}, where our method can successfully recover 80\% of the missing pixels from the input image.

\begin{table*}[t]
	\footnotesize
	\vspace{0.2cm}
	\caption{Inpainting results (PSNR) for 80\% of missing pixels and noise with std. deviation of $\sigma=12$.}\label{tab:psnr_inpaint12} \centering
	\begin{tabular}{ l c c c c c c c c }
		\hlineB{3}
		Method \hspace{20pt} & cameraman & house & peppers & Lena  & Barbara & boat & hill  & couple \\
		\hline

		P\&P-BM3D\scriptsize{~\cite{venkatakrishnan2013plug} }& 24.43 & 30.78 & 26.56 & 29.47 & 24.12 & 26.53 & 27.44 & 26.71\\
		IRCNN~\cite{zhang2017learning} & \textbf{24.59} & 30.19 & 26.94 & 29.52 & \textbf{25.49} & 26.58 & 27.55 & 26.62\\
		IDBP-BM3D~\cite{tirer2018image} & 24.51 & \textbf{31.14} & 26.79 & 29.69 & 25.06 & 26.64& 27.61 & 26.77\\
		IDBP-CNN~\cite{tirer2018image}  & 24.14 & 30.92 & \textbf{27.17} & \textbf{29.80} & 23.61 & 26.78 & 27.70 & \textbf{26.80}\\
		Ours & 23.17 & 30.32 & 26.01 & 29.44 & 24.30 & \textbf{26.81} & \textbf{28.40} &  26.77 \\
		\hlineB{3}
	\end{tabular}
\end{table*}

\begin{table*}[t]
	\footnotesize
	%\vspace{-0.1cm}
	\caption{Inpainting results (SSIM) for 80\% of missing pixels and noise with std. deviation of $\sigma=12$.}\label{tab:ssim_inpaint12} \centering
	\begin{tabular}{ l c c c c c c c c }
		\hlineB{3}
		Method \hspace{20pt} & cameraman & house & peppers & Lena  & Barbara & boat & hill  & couple \\
		\hline
		
		P\&P-BM3D\scriptsize{~\cite{venkatakrishnan2013plug} } & 0.774 & 0.839 & 0.807 & 0.818 & 0.705 & 0.707 & 0.683 & 0.734\\
		IRCNN~\cite{zhang2017learning} & 0.781 & 0.835 & 0.813 & 0.820 & \textbf{0.758} & 0.723 & 0.706 & 0.736\\
		IDBP-BM3D~\cite{tirer2018image} & 0.775 & \textbf{0.844} & 0.816 & 0.824 & 0.738 & 0.712 & 0.691 & 0.738 \\
		IDBP-CNN~\cite{tirer2018image}  & \textbf{0.786 }& 0.843 & \textbf{0.830} & \textbf{0.836} & 0.731 & \textbf{0.738} & \textbf{0.714} & \textbf{0.752}\\
		Ours & 0.757 & 0.808 & 0.814 & 0.803 & 0.741 & 0.721 & 0.707 &  0.735 \\
		\hlineB{3}
	\end{tabular}
	%\vspace{-0.1cm}
\end{table*}

\section{\uppercase{Conclusions}}
\label{sec:conclusion}

\noindent We discussed the challenges of using the plug-and-play denoising in ADMM optimization for MAP estimation.
We presented an approach for learning an end-to-end neural network to perform MAP denoising.
We used this network in an ADMM framework to perform generic image restoration tasks.
Our theoretical results show that we can guarantee to minimize the MAP objective using the proposed training strategy.
And our experimental validation showed that our method has significant improvement in speeding up the MAP optimization, compared to other explicit approaches.
Over all, other method supports the theoretical guarantees of MAP estimation, and at the same time benefits from the fast performance of other approaches without such guarantee.
\vfill
\section*{\uppercase{Acknowledgements}}
\noindent Financial support from the CSEM Data Program is grateully acknowledged.
Authors would also like to thank Laura Bujouves for her insightful comments. 

\begin{figure*}[h!]
	\vspace{0.1cm}
	\centering
	{\epsfig{file = 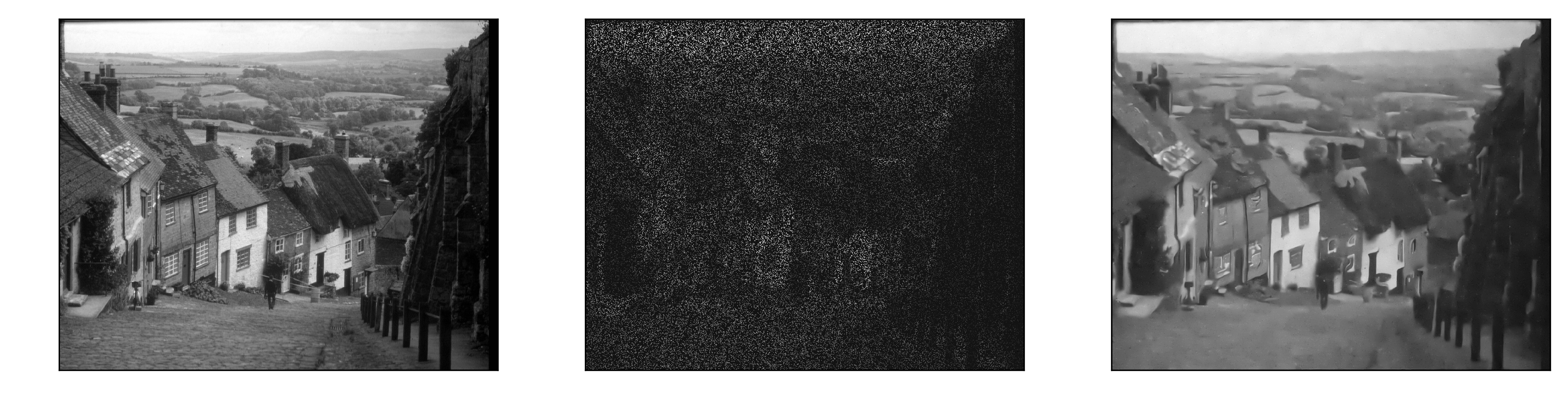, width = \linewidth}}
	\caption{Image inpainting example. \textbf{Left:} Original image, \textbf{Middle:} Degraded image with 80\%  of the pixels set to zero, \textbf{Right:} Inpainted image using our method.}
	\label{fig:inpainting}
	%\vspace{-0.1cm}
\end{figure*}

\bibliographystyle{apalike}
{\small
\bibliography{camera_ready}}

\begin{thebibliography}{}

\bibitem[Ahmad et~al., 2019]{ahmad2019plug}
Ahmad, R., Bouman, C.~A., Buzzard, G.~T., Chan, S., Reehorst, E.~T., and
  Schniter, P. (2019).
\newblock Plug and play methods for magnetic resonance imaging.
\newblock {\em arXiv preprint arXiv:1903.08616}.

\bibitem[Bigdeli and Zwicker, 2017]{bigdeli2017image}
Bigdeli, S.~A. and Zwicker, M. (2017).
\newblock Image restoration using autoencoding priors.
\newblock {\em arXiv preprint arXiv:1703.09964}.

\bibitem[Bigdeli et~al., 2017]{bigdeli2017deep}
Bigdeli, S.~A., Zwicker, M., Favaro, P., and Jin, M. (2017).
\newblock Deep mean-shift priors for image restoration.
\newblock In {\em Advances in Neural Information Processing Systems}, pages
  763--772.

\bibitem[Chang et~al., 2017]{chang2017one}
Chang, J.-H.~R., Li, C.-L., Poczos, B., Kumar, B.~V., and Sankaranarayanan,
  A.~C. (2017).
\newblock One network to solve them all-solving linear inverse problems using
  deep projection models.
\newblock In {\em ICCV}, pages 5889--5898.

\bibitem[Chen and Pock, 2016]{chen2016trainable}
Chen, Y. and Pock, T. (2016).
\newblock Trainable nonlinear reaction diffusion: A flexible framework for fast
  and effective image restoration.
\newblock {\em IEEE transactions on pattern analysis and machine intelligence},
  39(6):1256--1272.

\bibitem[Heide et~al., 2014]{heide2014flexisp}
Heide, F., Steinberger, M., Tsai, Y.-T., Rouf, M., Paj{\k{a}}k, D., Reddy, D.,
  Gallo, O., Liu, J., Heidrich, W., Egiazarian, K., et~al. (2014).
\newblock Flexisp: A flexible camera image processing framework.
\newblock {\em ACM Transactions on Graphics (TOG)}, 33(6):231.

\bibitem[Isola et~al., 2017]{isola2017image}
Isola, P., Zhu, J.-Y., Zhou, T., and Efros, A.~A. (2017).
\newblock Image-to-image translation with conditional adversarial networks.
\newblock In {\em Proceedings of the IEEE conference on computer vision and
  pattern recognition}, pages 1125--1134.

\bibitem[Jin et~al., 2017]{jin2017noise}
Jin, M., Roth, S., and Favaro, P. (2017).
\newblock Noise-blind image deblurring.
\newblock In {\em Proceedings of the IEEE Conference on Computer Vision and
  Pattern Recognition}, pages 3510--3518.

\bibitem[Krishnan and Fergus, 2009]{krishnan2009fast}
Krishnan, D. and Fergus, R. (2009).
\newblock Fast image deconvolution using hyper-laplacian priors.
\newblock In {\em Advances in neural information processing systems}, pages
  1033--1041.

\bibitem[Martin et~al., 2001]{berkeley}
Martin, D., Fowlkes, C., Tal, D., and Malik, J. (2001).
\newblock A database of human segmented natural images and its application to
  evaluating segmentation algorithms and measuring ecological statistics.
\newblock In {\em Proc. 8th Int'l Conf. Computer Vision}, volume~2, pages
  416--423.

\bibitem[Reehorst and Schniter, 2018]{reehorst2018regularization}
Reehorst, E.~T. and Schniter, P. (2018).
\newblock Regularization by denoising: Clarifications and new interpretations.
\newblock {\em IEEE Transactions on Computational Imaging}, 5(1):52--67.

\bibitem[Schelten et~al., 2015]{jinkernels}
Schelten, K., Nowozin, S., Jancsary, J., Rother, C., and Roth, S. (2015).
\newblock Interleaved regression tree field cascades for blind image
  deconvolution.
\newblock In {\em 2015 IEEE Winter Conference on Applications of Computer
  Vision}, pages 494--501. IEEE.

\bibitem[Schmidt and Roth, 2014]{schmidt2014shrinkage}
Schmidt, U. and Roth, S. (2014).
\newblock Shrinkage fields for effective image restoration.
\newblock In {\em Proceedings of the IEEE Conference on Computer Vision and
  Pattern Recognition}, pages 2774--2781.

\bibitem[S{\o}nderby et~al., 2016]{sonderby2016amortised}
S{\o}nderby, C.~K., Caballero, J., Theis, L., Shi, W., and Husz{\'a}r, F.
  (2016).
\newblock Amortised map inference for image super-resolution.
\newblock {\em arXiv preprint arXiv:1610.04490}.

\bibitem[Sun et~al., 2013]{sundataset}
Sun, L., Cho, S., Wang, J., and Hays, J. (2013).
\newblock Edge-based blur kernel estimation using patch priors.
\newblock In {\em IEEE International Conference on Computational Photography
  (ICCP)}, pages 1--8. IEEE.

\bibitem[Tirer and Giryes, 2018]{tirer2018image}
Tirer, T. and Giryes, R. (2018).
\newblock Image restoration by iterative denoising and backward projections.
\newblock {\em IEEE Transactions on Image Processing}, 28(3):1220--1234.

\bibitem[Ulyanov et~al., 2018]{ulyanov2018deep}
Ulyanov, D., Vedaldi, A., and Lempitsky, V. (2018).
\newblock Deep image prior.
\newblock In {\em Proceedings of the IEEE Conference on Computer Vision and
  Pattern Recognition}, pages 9446--9454.

\bibitem[Venkatakrishnan et~al., 2013]{venkatakrishnan2013plug}
Venkatakrishnan, S.~V., Bouman, C.~A., and Wohlberg, B. (2013).
\newblock Plug-and-play priors for model based reconstruction.
\newblock In {\em 2013 IEEE Global Conference on Signal and Information
  Processing}, pages 945--948. IEEE.

\bibitem[Zhang et~al., 2016]{zhang2016understanding}
Zhang, C., Bengio, S., Hardt, M., Recht, B., and Vinyals, O. (2016).
\newblock Understanding deep learning requires rethinking generalization.
\newblock {\em arXiv preprint arXiv:1611.03530}.

\bibitem[Zhang et~al., 2017a]{zhang2017beyond}
Zhang, K., Zuo, W., Chen, Y., Meng, D., and Zhang, L. (2017a).
\newblock Beyond a gaussian denoiser: Residual learning of deep cnn for image
  denoising.
\newblock {\em IEEE Transactions on Image Processing}, 26(7):3142--3155.

\bibitem[Zhang et~al., 2017b]{zhang2017learning}
Zhang, K., Zuo, W., Gu, S., and Zhang, L. (2017b).
\newblock Learning deep cnn denoiser prior for image restoration.
\newblock In {\em Proceedings of the IEEE conference on computer vision and
  pattern recognition}, pages 3929--3938.

\bibitem[Zoran and Weiss, 2011]{zoran2011learning}
Zoran, D. and Weiss, Y. (2011).
\newblock From learning models of natural image patches to whole image
  restoration.
\newblock In {\em 2011 International Conference on Computer Vision}, pages
  479--486. IEEE.

\end{thebibliography}

\end{document}